\newtheorem{theorem}{Theorem}
\newtheorem{lemma}{Lemma}
\title{Absence of eigenvalues in the continuous spectrum for Klein-Gordon operators}
\author{
  R. Ferreira \\
  Universidade de S\~ao Paulo -- USP \\
  S\~ao Paulo, SP, 05508-090, Brazil\\
  \texttt{rafaelufpi@gmail.com} \\
  %% examples of more authors
  \And 
   F. N. Lima \thanks{I am corresponding author.} \\
  GTMCOQ, Instituto Federal do Piau\'i -- IFPI \\
  São Raimundo Nonato, Piau\'i, 64770-000, Brazil\\
   \texttt{nogueira@ifpi.edu.br}\\
   \And 
  A. S. Ribeiro \\
  Instituto Federal do Piau\'i -- IFPI \\
  São Raimundo Nonato, Piau\'i, 64770-000, Brazil\\
  \texttt{antonius.ribeirus@ifpi.edu.br} \\
  %% examples of more authors
  %% \AND
  %% Coauthor \\
  %% Affiliation \\
  %% Address \\
  %% \texttt{email} \\
  %% \And
  %% Coauthor \\
  %% Affiliation \\
  %% Address \\
  %% \texttt{email} \\
  %% \And
  %% Coauthor \\
  %% Affiliation \\
  %% Address \\
  %% \texttt{email} \\
}
\begin{document}
\maketitle

\begin{abstract}
We construct the one-dimensional analogous of \textit{von}-Neumann Wigner potential to the relativistic Klein-Gordon operator, in which is defined taking asymptotic mathematical rules in order to obtain existence conditions of eigenvalues embedded in the continuous spectrum.  Using our constructed potential, we provide an explicit and analytical example of the Klein-Gordon operator with positive eigenvalues embedded in the so called relativistic "continuum region". Even so in this not standard example, we present the region of the "continuum" where those eigenvalues cannot occur. Besides, the absence of eigenvalues in the continuous spectrum for Klein-Gordon operators is proven to a broad general potential classes, including the minimally coupled electric Coulomb potential. Considering known techniques available in literature for Schrodinger operators, we demonstrate an expression for Klein-Gordon operator written in Schrodinger's form, whereby is determined the mathematical spectrum region of absence of eigenvalues.
\end{abstract}

\keywords{Schrodinger operator \and Klein-Gordon operator \and continuous spectrum \and eigenvalues \and  \textit{von} Newmman-Wigner potential.}

\section{Introduction\label{sec1}}

Theories involving general spectral problems are well-developed of a mathematical point of view for operators from non-relativistic quantum mechanics. However, the known relativistic operators that arising in the relativistic quantum mechanics still require certain mathematical rigours in terms of indispensable properties that need demonstrated in spectral theory. As example, since of \textit{von} Newmman-Wigner work \cite{von1993merkwurdige}, we have know that there is a potential whereby eigenvalues equations for Schrodinger operator provide an unitary positive energy \cite{ReedSimon4}. Great efforts of scientific community are aimed to investigate potential classes in which eigenvalues equations provide only negative eigenvalues. In particular, in the non-relativistic case of Schrodinger there are general theorems that proven the mathematical and physical question of non-existence of positive eigenvalues in the continuum or essential spectrum, see Refs. \cite{Kato1959, Odeh1965, Weidmann2010, Agmon1969, Agmon1970, Simon1969}. For Schrodinger's operator the positive spectrum is known as continuum spectrum, and a relevant problem of a mathematical point of view it is localize the essential spectrum that for Schrodinger's operator is exactly the same continuum spectrum.  

A non-trivial question in mathematical physics is localize the essential spectrum of the relativistic Klein-Gordon (K-G) operator coupled to some interaction potentials. There is a consistent demonstration for some potential classes that proven that positive relativistic region of the continuum (also namely in literature of positive continuum region \cite{VorGitTyu2007}), it is the essential spectrum, that is, the region $[m,\infty)$ \cite{Weder1977,Weder1978}. The negative continuum region occurs in $(-\infty,-m]$, being known as Dirac's sea. Conditions of absence of eigenvalues for Schrodinger's operators are well addressed problems in literature, as discussed above (more details can be seen in Refs.  \cite{Kato1959, Odeh1965, Weidmann2010, Agmon1969, Agmon1970, Simon1969}). However, such conditions for relativistic operators still need of general theorems, i. e., theorems capable to define with mathematical rigours the absence of eigenvalues in the located essential spectrum. In particular,  in physical problems the continuum spectrum is traditionally taken (without lost of generality), and locating the essential spectrum is commonly treated as an essentially mathematical problem. In fact, by use of the relativistic Virial theorem, using the version demonstrated by Kalf \cite{Kalf1976}, it is possible proving the theorem of absence of eigenvalues in systems with Coulomb interaction potential to Dirac's operator.  An alternative explanation can be found in Ref. \cite{Thaller1992}. 

For Klein-Gordon (K-G) operators, general theorems of absence of eigenvalues keep non-demonstrated up to now. A standard reduction procedure can be applied to Schrödinger and Dirac operators, where there is an inner product that allows reductions to partial wave subspace problems, see for example Refs. \cite{Thaller1992, Gitman2012, Oliveira2009}. This does not occurs to K-G operators that have a non-trivial structure of "energy norm", and probably being one of the reasons for the lack of such standard results. Weyl's criterion or the asymmetry form method (see ref. \cite{Gitman2012}) for self-adjointness in the half-line cannot be used for K-G operators due to the reduction procedures at each quantum number channel represent a non-trivial problem, as already mentioned. As discussed methodologies (Weyl's criterion or the asymmetry form method) are not applied in this case, the self-adjointness property cannot be extend to whole operator in $\mathbb{R}^{3}$. In particular, even in the level of self-adjointness theorems, only a few number of works (especially taking Coulomb singularities) are available in literature to K-G operators, as discussed by Gitman et al \cite{Gitman2012}. Although, Coulomb's potentials have been suitable to apply Weyl's criterion. We have know that for Schrödinger's operators there are physically reasonable examples of potentials providing bound states with positive energy. The first potential presented in literature is proposed by Wigner and \textit{von} Newmman \cite{von1993merkwurdige}, with its experimental evidence demonstrated in Ref.  \cite{Capasso1992}. In the ordinary framework of the relativistic quantum mechanics there are no potentials providing eigenvalues embedded in the continuum spectral region of relativistic operators. 

We provide an interaction potential in which the K-G operator has eigenvalues embedded in the continuum region, i. e., in $(- \infty,-m] \bigcup [m, \infty )$. General results on the absence of eigenvalues in the essential spectrum to some self-adjoint K-G operators obtained by Weder (see Refs. \cite{Weder1977, Weder1978} are presented taking electric Coulomb interactions, the core of this work. We have used the same notation introduced by Schechter \cite{Schechter1971}, and the same formalism extensively taken in Weder's works, Refs. \cite{Weder1977, Weder1978} in K-G operators treatment. Besides, some of Weder's theorems are useful to achieve the purposes of this paper, being presented always that necessary. 

This work is organized as follows. In section \ref{sec2}, it is presented a brief review on Klein-Gordon operator written in Schrodinger's form, as well as is constructed the one-dimensional analogous to the \textit{von} Neumann-Wigner potential. This potential is still used to demonstrate the existence of eigenvalues in the continuum spectrum to K-G operator. In section \ref{sec3}, the same methodology of previous section is extended to Coulomb interactions. In addition, the absence of eigenvalues in the continuum spectrum is demonstrated for some potentials. As it turns, we summarize our main findings and draw some perspectives in section \ref{sec4}. Throughout this work, we use units of $\hbar = c = 1.$

\section{Eigenvalues embedded in the continuum spectrum of Klein-Gordon operator \label{sec2}}

We begin this section with a brief review on K-G operators and Weder's conditions that prove the self-adjointness of K-G operator. Remembering that general expression to K-G equation is given by
\begin{equation}  
\label{s1.1}
\left( i \frac{\partial}{\partial t}-b_{0}\right)^{2}\psi(x,t)=\left[%
\sum_{i=1}^{n}\left(D_{i}-b_{i} \right)^{2}+m^{2}+q_{s} \right]\psi(x,t),
\end{equation}
where $x \in \mathbb{R}^{n}, t \in \mathbb{R}, D_{j}=-i \partial/\partial_{x_{j}},
b_{0}(x)$ is the electric potential, $b_{i}(x) (1\leq i \leq n$) is the magnetic potential, and $q_{s}(x)$ the scalar potential. By implementing the \textit{ansatz} transformations of Eq. \ref{s1.2} in Eq. \ref{s1.1}
\begin{equation}  
\label{s1.2}
f_{1}=\psi(x,t), \quad f_{2}=i\frac{\partial \psi}{\partial t}(x,t),
\end{equation}
it is possible written k-G equation given by expression \ref{s1.1} in its hamiltonian form with electromagnetic and scalar potentials
\begin{align}
& i\frac{\partial f}{\partial t}=hf, \\
& h=
\begin{bmatrix}
0 & 1 \\ 
l & Q
\end{bmatrix}
, \\
& l=\sum_{i=1}^{n}\left(D_{i} - b_{i}\right)^2+m^2+q(x), \\
& q(x)=q_{s}-q_{0}^{2}, Q(x)=2b_{0}, \\
& D(h)=\{f \in C_{0}^{\infty}(\mathbb{R}^{n})^{2}
, l f_{1} \in L^{2}(\mathbb{R}^{n}), Q f_{2} \in L^{2}(\mathbb{R}^{n})\}.
\label{s1.3}
\end{align}

In this case, the energy sesquilinear form associated can be written as

\begin{align} & \left(f,g\right)_{E}=\sum_{i=1}^{n} \left(
\left(D_{i}-b_{i}\right)f_{1},\left(D_{i}-b_{i} \right)g_{1}\right) \\ &
+\left(\left(m^{2}+q \right)f_{1},g_{1} \right)+\left(f_{2},g_{2} \right),
\\ & f,g \in C_{0}^{\infty}\left(\mathbb{R}^{n}\right).
\label{s1.4}
\end{align}

It is possible to verify that $h$ is symmetric in relation that sesquiliner form. In general, the value of sesquilinear form is not positive. To ensure its positivity, Weder \cite{Weder1977, Weder1978} introduced the following assumption:

\textbf{I)} There is a constant $\epsilon \geq 0$ such that
\begin{align}
& \int q^{-}|f|^{2}d^{n}x \leq \sum_{i=1}^{n}\| D_{i}f \|^{2}+\left(
m^{2}-\epsilon \right) \|f\|^{2}, \\
& f \in C_{0}^{\infty}\left( \mathbb{R}^{n}\right)  
\label{s1.5}
\end{align}
where $q^{\pm}$ are the positive and negative parts of the measurable function $q(x)$.

Notice that if \textbf{I)} holds sesquilinear form, $\left(\cdot , \cdot \right)_{E}$, starts to define a norm and $\textfrak{H}_E$ is defined as the completion of $C_{0}^{\infty}\left(\mathbb{R}^{n} \right)^{2}$ with this norm, as can seen in Ref. \cite{Weder1978} (Lemma 2.1). It is relevant to emphasize that for a pure electric potential interaction in the K-G equation $q_{s}=b_{i}=0$ and $b_{0}=\frac{e}{|x|}$ , \textbf{I)} holds if $|e| \leq (n-2)/2$.

So far we have discuss details of K-G operators as well as some requirement mathematical conditions related to density of probability in order to unaltered keep its positivity. Let us now start reviewing the problem of absence of positive eigenvalues for Scrhodinger operators.

We have know that there is an interesting potential in which the eigenvalues equation for Schordinger's operators has a positive eigenvalue embedded in the continuous spectrum with an unitary and positive value.This potential is namely \textit{von} Newmann-Wigner potential, and can be written as \cite{von1993merkwurdige, Simon1969}
\begin{align}
V_{NW}(r)=\frac{-32 \sin r \left[ \zeta (r)^{3} - 3 \zeta (r)^{2} \sin^{3} (r) + \zeta (r) r+ \sin^{3} r \right]}{(1+\zeta (r)^{2})^{2}},
\label{VN_pot}
\end{align}
where $\zeta(r)=2r-sin(2r)$, $|\textbb{x}|=r$ is the distance from the origin in $\mathbb{R}^{3}$, and $V_{NW}(r)$ is a physical example in which there is a square integrable $\psi$ with $(\Delta + V) \psi = \psi$ that provides an unitary eingenvalue. The mathematical computations to obtaining that potential can be extend to the one-dimensional case using $(\Delta + V) \psi = \psi$. A slightly different square integrable eigenfunction in $\mathbb{R}$ when compared to $\mathbb{R}^3$ case can be obtained by construction of the potential on the real line. In this situation, it is possible to obtain a similar expression to that given by Eq. \ref{VN_pot}, with the eigenfunction
\begin{equation}
\psi(x)=(sin(x))(1+\zeta(x)^{2})^{-1}.
\end{equation}
Notice that the contructed one-dimensional potential is differ to given by Eq. 13 only by argument $x \in \mathbb{R}$. In addition, this allows constructing a potential in which there is a self-adjoint K-G operator with positive eigenvalue in $(- \infty,-m] \bigcup [m, \infty )$.

In order to demonstrate the existence of a positive eigenvalue let us introduce some concepts and assumptions used by Weder \cite{Weder1978}:\begin{equation}
N_{\alpha , \delta}=sup_{x} \int_{|x-y|<\delta} |g(y)|^{2} \omega_{\alpha}(x-y) dy,
\end{equation}
where

\[
  \omega_{\alpha}(x)=\begin{cases}
               |x|^{\alpha -n}, \quad \alpha<n \\
               1-log|x|, \quad \alpha=n \\
               1, \quad \alpha>n,
            \end{cases}
\]

$N_{\alpha}(g) \equiv N_{\alpha ,1}(g)$ and $N_{\alpha}$ the set of all $g$ such that $N_{\alpha}(g)$ is bounded.

\textbf{II')} For $1 \leq i \leq n, b_{i}(x) \in N_{2}$, and if $n \geq 2, N_{2,s}(b_{i}) \rightarrow 0$ as $s \rightarrow 0$.

\textbf{III')} $|q(x)|^{1/2} \in $, and if $n \geq 2, N_{2,s}(q) \rightarrow 0$ as $s \rightarrow 0$.

\textbf{IV')} $C(x)=\sum_{i=1}^{n}(\frac{\partial}{\partial x_{i}}b_{i}(x)) \in N_{4}$, and if $n \geq 4, N_{4,s}(C) \rightarrow 0$ as $s \rightarrow 0$.

\textbf{V')} $q(x) \in N_{4}$, and if $n \geq 4, N_{4,s}(q) \rightarrow 0$ as $s \rightarrow 0$, $b^{2}=\sum_{i=1}^{n}(b_{i}(x))^{2} \in N_{4}$, and if $n \geq 4, N_{4,s}(b^{2}) \rightarrow 0$ as $s \rightarrow 0$.

\textbf{VI')} $b_{0}(x) \in N_{2}$, and if $n \geq 2, N_{2,s}(b^{0}) \rightarrow 0$ as $s \rightarrow 0$.

These set of conditions is relevant to correctly understand the follow theorems, as well as to achieve the central goal of this section, i. e., demonstrate the existence of positive eigenvalue in the continuous spectrum.

Allow us to write the theorem 2 from Weder's \cite{Weder1978} article as \textbf{Theorem 1}.
\begin{theorem}
If \textbf{I)} and \textbf{II')}-\textbf{VI')} are satisfied, $h$ has a self-adjoint extension, $H$, on $ \textfrak{H}_E$ with domain $D(H)=H_{2} \otimes H_{1}$.
\end{theorem}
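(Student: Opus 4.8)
The plan is to exhibit $H$ as the closure of $h$ and to obtain its self-adjointness by writing $h$ as a relatively bounded perturbation of the free Klein-Gordon generator, then invoking the Kato-Rellich theorem inside the energy Hilbert space $\mathfrak{H}_E$. I would first record that, by assumption \textbf{I)} together with the form bound furnished by \textbf{III')}, the potential $q$ is infinitesimally form-bounded relative to $-\Delta$, so the energy norm $(\cdot,\cdot)_E$ is equivalent to the free energy norm $\sum_i\|D_i f_1\|^2 + m^2\|f_1\|^2 + \|f_2\|^2$. Hence $\mathfrak{H}_E$ is a genuine Hilbert space, topologically identified with $H_1 \otimes L^2(\mathbb{R}^n)$, and it suffices to prove self-adjointness with respect to the equivalent free energy inner product.

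Next I would treat the free generator $h_0 = \begin{bmatrix} 0 & 1 \\ l_0 & 0\end{bmatrix}$, with $l_0 = -\Delta + m^2$. Passing to the Fourier transform, $l_0$ becomes multiplication by $|\xi|^2 + m^2 > 0$ and $h_0$ is diagonalized with real spectrum $\pm\sqrt{|\xi|^2+m^2}$; this shows $h_0$ is self-adjoint on the free energy space with domain $D(h_0) = H_2 \otimes H_1$ (the requirement $h_0 f \in \mathfrak{H}_E$ forces $f_2 \in H_1$ and $l_0 f_1 \in L^2$, i.e. $f_1 \in H_2$). Here one checks directly that $h_0$ is symmetric for the free energy form, the first slot carrying the $H_1$ pairing and the second the $L^2$ pairing.

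The main work is to show that the perturbation $V = h - h_0 = \begin{bmatrix} 0 & 0 \\ W & Q\end{bmatrix}$, with $W = -2\sum_i b_i D_i + i\sum_i(\partial_i b_i) + \sum_i b_i^2 + q$ and $Q = 2b_0$, is symmetric and $h_0$-bounded with relative bound $<1$. For a pair $f = (f_1,f_2)$ one has $\|Vf\|_E = \|Wf_1 + Qf_2\|_{L^2}$, while the free energy norm of $h_0 f$ dominates both $\|f_1\|_{H_2}$ (through $\|l_0 f_1\|_{L^2}$) and $\|f_2\|_{H_1}$ (through the first-slot $H_1$ pairing). I would then translate the membership hypotheses into the standard Schechter relative-boundedness estimates: \textbf{II')} bounds $\|b_i D_i f_1\|_{L^2}$ by a small multiple of $\|f_1\|_{H_2}$, \textbf{IV')}--\textbf{V')} bound the multiplication terms $\sum_i(\partial_i b_i)$, $b^2$ and $q$ in the $N_4$ sense, again controlled by $\|f_1\|_{H_2}$, and \textbf{VI')} bounds $\|Q f_2\|_{L^2}$ by a small multiple of $\|f_2\|_{H_1}$. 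The vanishing conditions $N_{\alpha,s}(\cdot)\to 0$ as $s\to 0$ are precisely what makes each of these relative bounds arbitrarily small, so $V$ has relative bound $0$ (in particular $<1$); symmetry of $V$ in the energy form reduces to $q$, $b_i$, $b_0$ being real multiplication operators.

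With these ingredients the Kato-Rellich theorem applies in $\mathfrak{H}_E$ and yields that $H := h_0 + V$ is self-adjoint on the unchanged domain $D(H) = D(h_0) = H_2 \otimes H_1$; since $C_0^\infty(\mathbb{R}^n)^2$ is a core for $h_0$ and hence for $H$, the operator $H$ is the desired self-adjoint extension of $h$. I expect the main obstacle to be the relative-boundedness step: one must carry the $N_\alpha$-class estimates, which are naturally stated in $L^2$ against powers of $-\Delta$, through the coupled structure of the energy norm, checking that the off-diagonal term $W$ and the lower-right term $Q$ are each absorbed by the correct component of $\|h_0 f\|_E$. The remaining points—equivalence of the two energy norms and self-adjointness of the diagonalizable free operator—are comparatively routine.
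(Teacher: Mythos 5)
Your overall toolkit (energy space, Sobolev domains, Kato--Rellich) is the right one, but the specific splitting $h = h_0 + V$ with $V = h - h_0 = \left[\begin{smallmatrix} 0 & 0 \\ W & Q \end{smallmatrix}\right]$ contains a fatal flaw: this $V$ is \emph{not} symmetric on the energy space, with respect to either the free energy inner product or the full one. In both inner products the second components are paired in plain $L^2$, so $\langle Vf,g\rangle = (Wf_1 + Qf_2,\, g_2)_{L^2}$ while $\langle f,Vg\rangle = (f_2,\, Wg_1 + Qg_2)_{L^2}$; taking $f_2 = 0$ and $g_1 = 0$ forces $(Wf_1,g_2)_{L^2} = 0$ for all $f_1 \in H_2$, $g_2 \in H_1$, which fails unless $W \equiv 0$. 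Realness of $q$, $b_i$, $b_0$ does not rescue this: a perturbation sitting in the lower-left corner of the matrix is never a symmetric operator on the energy space --- what it actually does is change the sesquilinear form that \emph{defines} the space. Since Kato--Rellich requires a symmetric perturbation, your main step does not go through; nor can you switch to the full form $(\cdot,\cdot)_E$, because $h_0$ is not symmetric with respect to that one (the same off-diagonal mismatch reappears with opposite sign). This is exactly the ``non-trivial structure of the energy norm'' of Klein--Gordon operators that the paper emphasizes in its introduction.

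The proof the paper relies on (Theorem 1 is quoted from Weder's article rather than proved in the paper; its mechanism is recalled in Section 3) avoids the obstruction by splitting differently. First, all of $W$ is absorbed at the $L^2$ level: conditions \textbf{II')}--\textbf{V')} are Schechter/Stummel-class hypotheses which make $W$ a symmetric, infinitesimally $(-\Delta)$-bounded perturbation on $L^2(\mathbb{R}^n)$, so $l = l_0 + W$ acquires a self-adjoint, bounded-below extension $L$ with $D(L) = H_2$; Kato--Rellich is legitimately applied \emph{here}, because on $L^2$ the operator $l$ is formally symmetric. Assumption \textbf{I)} and the norm-equivalence lemma then let one build $\mathfrak{H}_E \cong H_1 \otimes L^2$ out of $L$, and the operator $H_L = \left[\begin{smallmatrix} 0 & 1 \\ L & 0 \end{smallmatrix}\right]$ with $D(H_L) = H_2 \otimes H_1$ is self-adjoint on $\mathfrak{H}_E$ --- this is your Fourier-diagonalization step, but performed with the spectral theorem for $L$ in place of $l_0$. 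Only then is the electric term $\left[\begin{smallmatrix} 0 & 0 \\ 0 & Q \end{smallmatrix}\right]$ added: sitting in the lower-right corner it \emph{is} symmetric for the energy form, since $\langle Vf,g\rangle_E = (Qf_2,g_2)_{L^2}$, and \textbf{VI')} makes it relatively bounded with bound $<1$, so Kato--Rellich on $\mathfrak{H}_E$ finishes the proof. In short: the magnetic and scalar terms must be absorbed into the spatial operator \emph{before} passing to the Hamiltonian form; only the electric potential can be treated as a Kato--Rellich perturbation on the energy space.
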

Here $H_{k}$ ($k=1,2$) are standard Sobolev spaces $W^{k,2}$. We note that in some cases, a bounded $g$ implies $N_{\alpha}(g)< \infty$. 

So far, we demonstrate that the $h$ operator has a self-adjoint extension, used along of this section.

We have almost all tools to demonstrate the central theorem of this section, except the \textbf{Lemmas 1} and \textbf{2} that are presented follow. Particularly, the \textbf{Lemma  1} is interesting in the treatment of bounded potentials, and its proof is trivial.
\begin{lemma} 
\label{lemma1}
Let $g(y)$ be bounded and $\alpha \geq n$, then $N_{\alpha}(g)$ is bounded.
\end{lemma}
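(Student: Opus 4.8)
The plan is to exploit the boundedness of $g$ in order to reduce the supremum defining $N_\alpha(g)$ to a single, $x$-independent integral of the weight $\omega_\alpha$ over the unit ball, and then to check that this integral is finite in each of the two relevant cases, $\alpha > n$ and $\alpha = n$.

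First I would fix a constant $M$ with $|g(y)| \leq M$ for all $y$, so that $|g(y)|^2 \leq M^2$. Pulling this constant out of the integral gives, for every $x$,
\begin{equation}
\int_{|x-y|<1} |g(y)|^2\, \omega_\alpha(x-y)\, dy \leq M^2 \int_{|x-y|<1} \omega_\alpha(x-y)\, dy .
\end{equation}
The substitution $z = x-y$ shows that the right-hand integral equals $M^2 \int_{|z|<1} \omega_\alpha(z)\, dz$, a quantity that no longer depends on $x$; hence the supremum over $x$ is bounded by the same number. It therefore remains only to verify that $\int_{|z|<1} \omega_\alpha(z)\, dz < \infty$ for $\alpha \geq n$.

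In the case $\alpha > n$ one has $\omega_\alpha \equiv 1$, so the integral is simply the volume of the unit ball in $\mathbb{R}^n$ and the bound is immediate. The case $\alpha = n$ is the only one requiring a short computation: here $\omega_n(z) = 1 - \log|z|$, and passing to polar coordinates reduces the question to the finiteness of $\int_0^1 (1 - \log r)\, r^{n-1}\, dr$. The first term contributes $1/n$, while the logarithmic term equals $1/n^2$ by the standard identity $\int_0^1 r^{n-1}(-\log r)\, dr = 1/n^2$; in particular, the singularity of $-\log r$ at the origin is integrable because it is tamed by the factor $r^{n-1}$ for $n \geq 1$.

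The main (and essentially only) point deserving attention is this convergence of the logarithmic integral in the borderline case $\alpha = n$; everything else is just translation invariance and extracting the sup-norm of $g$. Since both $1/n$ and $1/n^2$ are finite, I would conclude $N_\alpha(g) \leq M^2\, \sigma_{n-1}\, (n^{-1} + n^{-2}) < \infty$, where $\sigma_{n-1}$ denotes the surface area of the unit sphere in $\mathbb{R}^n$, which establishes the claim and justifies the remark that the proof is trivial.
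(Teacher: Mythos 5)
Your proof is correct and is precisely the "trivial" argument the paper has in mind (the paper offers no written proof, only the remark that it is trivial): pull out the sup-norm of $g$, translate to reduce to $\int_{|z|<1}\omega_\alpha(z)\,dz$, and check integrability of $1-\log|z|$ in the borderline case $\alpha=n$, which your computation $\int_0^1 r^{n-1}(-\log r)\,dr = 1/n^2$ handles correctly. Nothing further is needed.
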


An alternative form to \textbf{I} is presented in \textbf{Lemma 2}, whereby the condition \textbf{I} is easly verified. 
\begin{lemma} 
\label{lemma2}
Let
\begin{equation}
S_{\lambda}(q)=sup_{x}\int |q(y)|\ G_{2, \lambda}(x-y)dy,
\end{equation}
where $G_{2, \lambda}(q)$ is the inverse Fourier transform of
\begin{equation}
(2 \pi)^{- \frac{n}{2}}(\lambda + |\eta|^{2})^{-1}, \quad \lambda \geq 0,
\end{equation}
then \textbf{I)} holds if and only if $S_{\lambda}(q^{-}) \leq 1$, for some $\lambda < m^{2}$.
\end{lemma}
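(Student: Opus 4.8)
The plan is to recast condition \textbf{I)} as an operator bound and recognise it as an instance of the Birman--Schwinger principle, after which the kernel $G_{2,\lambda}$ enters naturally. First I would set $\lambda=m^{2}-\epsilon$ and $V=q^{-}\ge 0$, and note that condition \textbf{I)} is exactly the quadratic-form inequality $\int V|f|^{2}\,d^{n}x\le \langle f,(-\Delta+\lambda)f\rangle$ for all $f\in C_{0}^{\infty}(\mathbb{R}^{n})$, where $\langle f,(-\Delta+\lambda)f\rangle=\sum_{i=1}^{n}\|D_{i}f\|^{2}+\lambda\|f\|^{2}$. Writing $A=(-\Delta+\lambda)^{1/2}$ and substituting $g=Af$, this is equivalent to $\|V^{1/2}A^{-1}g\|^{2}\le\|g\|^{2}$ for all $g$, i.e.\ to the $L^{2}$ operator-norm bound $\|B_{\lambda}\|\le 1$, where $B_{\lambda}=V^{1/2}(-\Delta+\lambda)^{-1}V^{1/2}$. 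The reason for introducing $G_{2,\lambda}$ is that, with the $(2\pi)^{-n/2}$ normalisation in the statement, it is precisely the convolution kernel of $(-\Delta+\lambda)^{-1}$, so $B_{\lambda}$ is the integral operator with the nonnegative symmetric kernel $V(x)^{1/2}G_{2,\lambda}(x-y)V(y)^{1/2}$, and condition \textbf{I)} becomes the Birman--Schwinger condition $\|B_{\lambda}\|\le 1$.

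For the sufficiency direction ($S_{\lambda}(V)\le 1\Rightarrow$ \textbf{I)}) I would bound $\|B_{\lambda}\|_{L^{2}\to L^{2}}$ by $S_{\lambda}(V)$ via Stein's complex interpolation applied to the analytic family $F(z)=V^{z}(-\Delta+\lambda)^{-1}V^{1-z}$ on the strip $0\le \mathrm{Re}\,z\le 1$. On the edge $\mathrm{Re}\,z=0$ the operator is a conjugate of $(-\Delta+\lambda)^{-1}V$ by the unimodular multipliers $V^{\pm it}$, so its $L^{\infty}\to L^{\infty}$ norm is $\sup_{x}\int G_{2,\lambda}(x-y)V(y)\,dy=S_{\lambda}(V)$; on the edge $\mathrm{Re}\,z=1$ it is a conjugate of $V(-\Delta+\lambda)^{-1}$, whose $L^{1}\to L^{1}$ norm is $\sup_{y}\int V(x)G_{2,\lambda}(x-y)\,dx=S_{\lambda}(V)$ because $G_{2,\lambda}$ is even. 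Interpolating to $z=\tfrac12$ gives $\|B_{\lambda}\|_{L^{2}\to L^{2}}=\|F(\tfrac12)\|\le S_{\lambda}(V)\le 1$, which is condition \textbf{I)}. Equivalently one may run a weighted Schur test directly on the positive kernel of $B_{\lambda}$; in either case it is the nonnegativity and evenness of $G_{2,\lambda}$ that make both endpoint norms collapse onto the single quantity $S_{\lambda}(V)$.

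For the necessity direction (\textbf{I)} $\Rightarrow S_{\lambda}(V)\le 1$ for some $\lambda<m^{2}$) I would first use monotonicity: since $G_{2,\lambda}$ decreases pointwise as $\lambda$ grows, both $\lambda\mapsto\|B_{\lambda}\|$ and $\lambda\mapsto S_{\lambda}(V)$ are nonincreasing and continuous, so it suffices to compare them near the admissible endpoint $\lambda\uparrow m^{2}$. The essential point is then to bound $S_{\lambda}(V)=\|G_{2,\lambda}*V\|_{\infty}$ from above by the $L^{2}$ datum $\|B_{\lambda}\|$: here I would exploit that $C_{\lambda}=(-\Delta+\lambda)^{-1}V$ is positivity preserving with $\|C_{\lambda}\|_{L^{\infty}\to L^{\infty}}=S_{\lambda}(V)$ and is similar to the self-adjoint $B_{\lambda}$ (conjugation by $V^{1/2}$), and attempt to convert the $L^{2}$ norm into control of $C_{\lambda}\mathbb{1}=G_{2,\lambda}*V$ through a Perron-type positive-eigenfunction argument combined with a test function concentrated where $G_{2,\lambda}*V$ is nearly maximal. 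I expect precisely this converse to be the main obstacle: the sufficiency uses only the upper bound $\|B_{\lambda}\|\le S_{\lambda}(V)$ furnished by Schur/interpolation, whereas necessity requires that the $L^{\infty}$ quantity $S_{\lambda}(V)$ be essentially saturated by the $L^{2}$ spectral bound, so it is there that the fine structure of the nonnegative kernel $G_{2,\lambda}$ and the sharp threshold value $1$ must be used with care.
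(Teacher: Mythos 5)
The paper itself offers no proof of this lemma --- it simply defers to Weder \cite{Weder1978} --- so your attempt can only be measured against the standard argument behind that citation. Your \emph{if} direction is correct: with $\lambda=m^{2}-\epsilon$ and $V=q^{-}$, condition \textbf{I)} is the form inequality $\langle f,Vf\rangle\le\langle f,(-\Delta+\lambda)f\rangle$, hence equivalent to $\|B_{\lambda}\|\le 1$ for the Birman--Schwinger operator $B_{\lambda}=V^{1/2}(-\Delta+\lambda)^{-1}V^{1/2}$, and with the stated $(2\pi)^{-n/2}$ normalisation $G_{2,\lambda}$ is indeed the convolution kernel of $(-\Delta+\lambda)^{-1}$. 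Two remarks on execution: the bound $\|B_{\lambda}\|\le S_{\lambda}(V)$ needs far less machinery than Stein interpolation --- write the kernel integrand as the product of $|g(x)|\,G_{2,\lambda}(x-y)^{1/2}V(y)^{1/2}$ and $|g(y)|\,G_{2,\lambda}(x-y)^{1/2}V(x)^{1/2}$ and apply Cauchy--Schwarz, which gives $\langle g,B_{\lambda}g\rangle\le S_{\lambda}(V)\|g\|^{2}$ using only that $G_{2,\lambda}\ge 0$ is even; also $V^{\pm it}$ is undefined where $V=0$, so your analytic family needs $V$ replaced by $V+\delta$ or restriction to $\{V>0\}$. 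Note that this \emph{if} direction is the only one the paper ever invokes (in the proof of Theorem \ref{counter}).

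The converse is a genuine gap, and the route you sketch cannot be completed, because the implication it would need --- that the $L^{2}$ bound $\|B_{\lambda}\|\le 1$ forces the $L^{\infty}$ quantity $S_{\lambda}(V)\le 1$ --- is false; for positive kernels the operator norm sits strictly below the Schur quantity, and your own Perron idea shows why. Take $n=3$ and $V=c\chi_{B_{r}}$, so $B_{\lambda}$ has kernel $cG_{2,\lambda}(x-y)$ on $B_{r}\times B_{r}$: this is Hilbert--Schmidt and positivity improving, so by Jentzsch's theorem its norm $\mu$ is an eigenvalue with a.e.\ positive eigenfunction $\varphi$; integrating the eigenvalue equation gives $\mu\int\varphi=\int\varphi(y)h(y)\,dy$ with $h(y)=c\int_{B_{r}}G_{2,\lambda}(x-y)\,dx$, and since $h$ is radially strictly decreasing with $h(0)=S_{\lambda}(V)$, we get $\mu<S_{\lambda}(V)$ \emph{strictly}. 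Now choose $c$ so that $\|B_{m^{2}}\|=1$: then \textbf{I)} holds (with $\epsilon=0$; the example perturbs to $\epsilon>0$ if desired), $V$ is bounded with compact support, yet $S_{\lambda}(V)\ge S_{m^{2}}(V)>1$ for every $\lambda\le m^{2}$ because $G_{2,\lambda}$ increases pointwise as $\lambda$ decreases. So \textbf{I)} holds while $S_{\lambda}(q^{-})\le 1$ fails for every $\lambda<m^{2}$: the \emph{only if} half of the statement as written is not provable, and no Perron-type saturation argument can rescue it, since the Perron eigenfunction weights the kernel rows non-uniformly --- exactly the source of the strict gap. The defensible statement is: \textbf{I)} is equivalent to $\|(q^{-})^{1/2}(-\Delta+\lambda)^{-1}(q^{-})^{1/2}\|\le 1$ for some $\lambda=m^{2}-\epsilon$, and is \emph{implied by} $S_{\lambda}(q^{-})\le 1$ for some $\lambda<m^{2}$; your proof establishes precisely that implication, which is all the paper actually uses.
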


A proof of \textbf{Lemma} \ref{lemma2} can be found in Ref. \cite{Weder1978}.

As adressed in introduction of this paper, a traditional problem of spectral theory of Srcrodinger operators is determine the existence, or no, of eigenvalues in the essential spectrum. However, when we take K-G operators such existence need to be better investigated.  In order to analyse the eventual existence of eigenvalues in the essential spectrum to K-G operators, we present the follows theorem where a pure scalar interation is taken into account. The \textbf{Theorem} \ref{counter} shows that if the scalar interaction is the \textit{von}-Newmam-Wigner potential there are eigenvalues embeeded in the esssential spectrum of the k-G operator. It is relevant to emphasize that this our result can be interpreted as the relativistic version of the demonstrations already introduced by \textit{von}-Newmman-Wignner to Schrodinger operators in Ref. \cite{von1993merkwurdige}. This result does not occurs to pure eletric interaction, as can seen in section \ref{sec3}.
\begin{theorem} \label{counter}
$h$ with a real valued pure scalar interaction
\begin{equation} \label{s-potential}
q_{s}=V_{WN},
\end{equation}
where $V_{WN}$ is Wigner - von Newmman potential, has a self-adjoint extension $H_{F}$, with $D(H_{F})=H_{2} \otimes H_{1}$, and further $H_{F}$ has eigenvalues embedded in $(-\infty,-m] \bigcup [m, + \infty)$.
\end{theorem}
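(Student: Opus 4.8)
The plan is to treat self-adjointness and the embedded eigenvalue as two separate tasks, the second of which rests on the triangular structure of $h$. The guiding observation is that for a pure scalar interaction one has $b_{i}\equiv 0$, $b_{0}\equiv 0$, hence $Q=2b_{0}=0$ and $q=q_{s}-b_{0}^{2}=V_{WN}$, so the matrix operator collapses to $h=\begin{bmatrix} 0 & 1 \\ l & 0 \end{bmatrix}$ with $l=-\Delta+m^{2}+V_{WN}$ (recall $\sum_{i}D_{i}^{2}=-\Delta$). This structure is exactly what reduces the K-G eigenvalue problem to a single Schr\"odinger eigenvalue problem.

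For self-adjointness I would invoke \textbf{Theorem 1}, checking \textbf{I)} and \textbf{II')}--\textbf{VI')}. Since $b_{i}\equiv 0$ and $b_{0}\equiv 0$, conditions \textbf{II')}, \textbf{IV')}, \textbf{VI')} and the magnetic part of \textbf{V')} hold trivially, because the relevant functions, together with $C=\sum_{i}\partial_{x_{i}}b_{i}$ and $b^{2}$, all vanish. The remaining conditions involve only $q=V_{WN}$, and the key input is that $V_{WN}$ is bounded and decays like $1/|x|$: expanding \eqref{VN_pot} with $\zeta(x)\sim 2x$ makes the numerator $O(x^{3})$ and the denominator $O(x^{4})$. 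Boundedness and \textbf{Lemma}~\ref{lemma1} then give $N_{\alpha}(V_{WN})<\infty$ for $\alpha\geq n$, settling \textbf{III')} and \textbf{V')}, while the local smallness $N_{\alpha,s}(V_{WN})\to 0$ as $s\to 0$ follows from $\int_{|x-y|<s}\omega_{\alpha}(x-y)\,dy\to 0$. Condition \textbf{I)} I would verify through \textbf{Lemma}~\ref{lemma2}: with $M=\|V_{WN}^{-}\|_{\infty}$ and $\int G_{2,\lambda}(z)\,dz=\lambda^{-1}$ (positivity of $G_{2,\lambda}$) one obtains $S_{\lambda}(V_{WN}^{-})\leq M/\lambda\leq 1$ for some $\lambda<m^{2}$, which holds whenever $m^{2}>M$. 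Hence \textbf{Theorem 1} yields $H_{F}$ with $D(H_{F})=H_{2}\otimes H_{1}$.

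For the embedded eigenvalue I would solve $H_{F}f=\lambda f$ with $f=(f_{1},f_{2})$. The matrix form gives $f_{2}=\lambda f_{1}$ and $lf_{1}=\lambda f_{2}=\lambda^{2}f_{1}$, that is
\begin{equation}
(-\Delta+V_{WN})\,f_{1}=(\lambda^{2}-m^{2})\,f_{1}.
\end{equation}
By the defining property of the von Neumann-Wigner potential, $\psi(x)=\sin(x)(1+\zeta(x)^{2})^{-1}$ is a genuine $L^{2}$ eigenfunction of $-\Delta+V_{WN}$ with the embedded non-relativistic eigenvalue $E_{0}=1$. Taking $f_{1}=\psi$ forces $\lambda^{2}-m^{2}=1$, so $\lambda=\pm\sqrt{m^{2}+1}$, with eigenvector $f=(\psi,\pm\sqrt{m^{2}+1}\,\psi)$. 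Because $\sqrt{m^{2}+1}>m$, both values lie strictly in the interior of $(-\infty,-m]\cup[m,+\infty)$, i.e.\ they are genuinely embedded in the essential spectrum located by Weder. Finally one checks $f\in D(H_{F})=H_{2}\otimes H_{1}$: the decay $\psi(x)\sim \sin(x)/(4x^{2})$ and the smoothness of $\psi$ give $\psi\in H_{2}$ and $\lambda\psi\in H_{1}$.

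The step I expect to be the main obstacle is condition \textbf{I)}. Unlike the smoothness and decay requirements, which are automatic from the explicit form of $V_{WN}$, positivity of the energy form is not free: it ties the admissible mass to the depth of the well through $m^{2}>\|V_{WN}^{-}\|_{\infty}$ (equivalently $S_{\lambda}(V_{WN}^{-})\le 1$ for some $\lambda<m^{2}$). One must also ensure the reduction $lf_{1}=\lambda^{2}f_{1}$ is carried out inside the self-adjoint realization rather than merely formally, so that $f$ lies in $D(H_{F})$ and not only in $\mathfrak{H}_{E}$; this is precisely why the explicit verification of $\psi\in H_{2}$ is indispensable.
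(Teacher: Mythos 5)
Your proposal is correct and follows essentially the same route as the paper's own proof: Weder's hypotheses \textbf{II')}--\textbf{VI')} are checked trivially or via \textbf{Lemma}~1 (boundedness of $V_{WN}$), condition \textbf{I)} is checked via \textbf{Lemma}~2, and the eigenvalue equation $H_{F}f=Ef$ is reduced through the triangular structure to the one-dimensional Schr\"odinger problem $(-\Delta_{1}+V_{WN}+m^{2})f_{1}=E^{2}f_{1}$, whose von Neumann--Wigner eigenfunction $\sin(x)\,(1+\zeta(x)^{2})^{-1}$ yields $E=\pm\sqrt{1+m^{2}}$ embedded in $(-\infty,-m]\cup[m,\infty)$. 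The only substantive divergence is at condition \textbf{I)}: the paper vaguely asserts that the exponential ``suppression'' in $S_{\lambda}(V_{WN}^{-})$ guarantees infinitely many admissible $\lambda$ with $\sqrt{\lambda}<m$, whereas your bound $S_{\lambda}(V_{WN}^{-})\leq \|V_{WN}^{-}\|_{\infty}/\lambda$ makes explicit the mass restriction $m^{2}>\|V_{WN}^{-}\|_{\infty}$ --- a restriction the paper's statement omits but which your more careful accounting suggests is genuinely needed, since $S_{\lambda}(V_{WN}^{-})\to\infty$ as $\lambda\to 0$.
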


\begin{proof}
Keep $n=1$ and the pure scalar interaction given by Eq. \ref{s-potential}, then \textbf{II')} and \textbf{IV')} are directly satisfied. By the fact that in this particular case $q=q_{s}$, and $V_{WN}(x)$ is bounded,  it is possible to see that $|q|^{1/2} \in N_{2}$ by applying \textbf{Lemma} \ref{lemma1}, and then we have \textbf{III')}. In an analogous way, \textbf{V')} and \textbf{VI')} hold as well. Let us suppose \textbf{I)} to be valid, and this will be really demonstrated in the end of this theorem, then, by Weder's theorem, $H_F$ is the necessary self-adjoint extension. By using of the eigenvalue problem $H_{F} \psi = E \psi$, one get
\begin{equation} \label{map}
(- \Delta_{1}+q_{s}+m^{2})\psi_{1}=E^{2}\psi_{1}, \quad \psi_{1} \in H_{2}(\mathbb{R}),
\end{equation}
where $\Delta_n$ denotes the Laplacian in $n$ dimensions. Pick $q=V_{WN}$, then we have $E= \pm \sqrt{1+m^{2}}$ due to the presence of \textit{von}-Neumann Wigner potential in (\ref{map}). It is important to stress that to the value  $E= - \sqrt{1+m^{2}}$ the \textbf{I} is not satisfied. Although this found negative value is in the known Dirac sea region, the presence of eigenvalues in this region should be better investigated. Now, let us consider \textbf{I)}. For the choice above of $q$, and by \textbf{Lemma} \ref{lemma2}, we see that \textbf{I)} holds if $S_{\lambda}(V_{WN}^{-}) \leq 1$. Indeed
\begin{equation}
\sup_{x} \int |q^{-}(y)|G_{2, \lambda}(x-y) dy = (1/2 \sqrt{\lambda}) \sup_{x} \int |V_{WN}^{-}| e^{- \sqrt{\lambda}|x-y|dy},
\end{equation}
where the "supression" of the integral ensures the existence of an infinite number of particular values for $\sqrt{\lambda}<m$, such that \textbf{I)} holds true. \rule{5pt}{5pt}
\end{proof}

Notice that the leading term of $V_{WN}$ for large $x$ is given by 
\begin{equation}
-\frac{8 \sin (2x)}{x},
\end{equation}
hence, $\limsup x (\partial V_{2})/\partial x = 16$, where $V_{2}$ is the infinitely differentiable function defined with more details in \textbf{Theorem 3} of section \ref{sec3}. Therefore, it is possible to conclude that $E^{2}-m^{2}<16$ in \textbf{Theorem} \ref{counter} (see the intitulated "main Theorem" in Ref. \cite{Simon1969}). Thus, even in this "counter-example case", one can concludes that there are no eigenvalues in $(\sqrt{16+m^{2}},\infty)$.

We would be have problems when applying Weder's theorem in the case $n \geq 2$ (see \textbf{Theorem} \ref{counter}). For example, even for $n=2$, we do not have \textbf{III')} completely satisfied since $N_{2,s}(q) \not\to 0$ as $s \rightarrow 0$ for $q$ (in the case under consideration). The fact we have used the pure scalar interaction is quite technical. Suppose we would like to find eigenvalues in $[m,+\infty)$, for $h$ with a pure electric interaction given by

\begin{equation} \label{map2}
-(b_{0})^2+2Eb_{0}=V_{WN}.
\end{equation}
Thus, even for the most suitable choice for $b_{0}$, we would find problems with the expressions involving the requirement \textbf{I)} with the map (see expression \ref{map2}), the stronger inequality
\begin{equation}
S(q_{1}+ V_{WN}^{-})\leq 1,
\end{equation}
where $q_{1}=1+m^{2}$ takes place.  So, \textbf{Theorem} \ref{counter} to pure electric interactions remains open for plausible values of the mass.

So far, we have demonstrated the existence of a positive eigenvalue in the region of spectrum similar to the found results for continous spectrum of non-relativistic Schorodinger operator. In next section, as already mentioned above, in contrast of \textbf{Theorem} \ref{counter}, we prove that K-G operator with pure eletric interactions does not has eigenvalues embeeded in the essential spectrum.  

\section{Absence of eigenvalues in "continous spectrum" region for Coulomb electric interaction \label{sec3}}

Now, we extend same methotodology of previous section in order to determine the absence of eigenvalues in the essential spectrum of K-G operators with Coulomb interactions.To do that, we list some important results about self-adjointness, and on the essential spectrum location, using a sequence of complementary assumptions introduced by Weder in Ref. \cite{Weder1978}. In short, these assumptions are related to the self-adjointness;  the essential spectrum location; and the absence of eigenvalues theorems concerning to the K-G operators. 

Weder's hypothesis (see Ref. \cite{Weder1978}) are listed in \textbf{I)}-\textbf{V)}. In addition, we present here assumption \textbf{VI)}. Note that hypotesis \textbf{I)} is already presented in section \ref{sec2}.

\textbf{II)} 

(1) $b_{i}(x) \in N_{2}$, $1 \leq i \leq n$, and if $n \geq 2$, $N_{2,\delta}(b_{i}) \xrightarrow{\delta \rightarrow 0} 0$, \\

(2) $q(x)=q_{1}(x)+q_{c}(x)$, $|q_{1}|^{1/2} \in N_{2}$, and if

$n \geq 2$, $N_{2,\delta}(|q_{1}|^{1/2})\xrightarrow{\delta \rightarrow } 0$ $\cdot$ $q_{c}(x) \equiv 0$ if $ n \leq 2$, and

$q_{c}(x)=-\frac{e^{2}}{|x|^{2}}, |e|(n-2)/2$ if $n>2.$

\textbf{III)}

$(N_{4,x}(b_{i})+N_{4,x}(|q_{1}|^{1/2})) \xrightarrow{\delta \rightarrow 0} 0$,

\textbf{IV)}

$b_{0}=b_{0}^{1}+b_{0}^{c} \in N_{2}$ and if $n \geq 2$

$N_{2,j}(b_{0}^{1})\xrightarrow{\delta \rightarrow } 0$.  $q_{c}(x) \equiv 0$ if $n \leq 2$.

If $n \geq 3$, $b_{0}^{c}(x)=\frac{e}{|x|}$, where $|e|<\frac{n-2}{2 \sqrt{17}}$.

\textbf{V)}

$N_{2,x}(b_{0}^{1})\xrightarrow{|x| \rightarrow \infty} 0$.

\textbf{VI)} $b_{0}^{c}(x)$ is such that, for some real $k$, $-b_{0}^{c}(x)^{2}+kb_{0}^{c}(x)$ is a real function in \textbf{$R^{n}$} so that:

(1) Multiplication by $V_{1}+V_{2}=-b_{0}^{c}(x)^{2}+kb_{0}^{c}(x)$ is $- \Delta$-bounded  with relative bound less than $1$.

(2) There is a closed set $S$ of measure zero so that \textbf{$R^{n}$}$\backslash S$ is connected and so that $V_{1}+V_{2}$ is bounded on any compact subset of \textbf{$R^{n}$}$\backslash S$.

(3) $-V_{1}$ is bounded outside some ball $\{x| |x|< r_{0}\}$, and $|x| b_{0}^{2}(x) \rightarrow 0$ as $|x| \rightarrow \infty$.

(4) $V_{2}$ is bounded outside some ball $\{x| |x|< r_{0}\}$, and $k b_{0}^{2}(x) \rightarrow 0$ as $|x| \rightarrow \infty$. (This assumption can be replaced by $k b_{0}(x)<0$ for $|x|>r_{0}$).

(5)  If $b_{0}^{c}$ maps $r$ in $b_{0}^{c}(r,\cdot)$ from $(0,\infty)$ to $L^{\infty}(S^{n-1})$ where $S^{n-1}$ is the $(n-1)$-dimensional sphere, then for $|x|=r>r_{0}, b_{0}^{c}(x)$ is differentiable as an $L^{\infty}$-valued function and for some suitable $k$, $k \overline{lim}_{r \rightarrow \infty} r \frac{\partial b_{0}^{c}}{\partial r} \leq 0$.

If \textbf{I)} and \textbf{II)} are satisfied, the norm of the completion $%
\textfrak{H}_E$ is equivalent to the norm of $H^{1}(\textbf{R}^{n}) \otimes L^{2}(\textbf{R}^{n})$, and they coincide as sets (this is the Lemma 1.4 by Weder \cite{Weder1978}). By this lemma and assumption \textbf{II)}, $l$ (remember Eq. (\ref{s1.3}) above) has a self-adjoint bounded below extension, denoted by $L$ (this is the Lemma 1.5 in Ref. \cite{Weder1978}). 

Now, let
\begin{align}
H=H_{L}+V, D(H)=D(L) \otimes H^{1} \\
H_{L}=
\begin{bmatrix}
0       & 1  \\
 L       & 0  \\
\end{bmatrix},
V=
\begin{bmatrix}
0 & 0 \\
 0 & Q \\
\end{bmatrix}
\nonumber
\end{align}
then if \textbf{I)} - \textbf{V)} are satisfied, then $H$ is a self-adjoint extension of $h$ see (\ref{s1.3}), with domain $D(H)=D(L) \otimes H^{1}(\textbf{R}^{n})$ and its essential spectrum coincides with $(-\infty,-m]\bigcup[m,\infty)$ (this reult is presented in theorem 2 in Ref. \cite{Weder1977}).

So far, using the formalism introduced by Weder in Refs. \cite{Weder1977} and \cite{Weder1978}, it is possible to obtain the \textbf{Theorem} \ref{main}. This result is immediately presented below, and opportunetely demonstrated after presenting an important appilication of that.

\begin{theorem} \label{main}
The self-adjoint extension $H$ of the K-G operator, with a pure electric field $b_{0}^{c}$ satisfying \textbf{I)} - \textbf{VI)} (we call this operator $\tilde{H}$), has no eigenvalues embedded in its essential spectrum.
\end{theorem}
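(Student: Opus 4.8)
The plan is to reduce the spectral problem for the Klein-Gordon operator $\tilde H$ to an associated Schr\"odinger-type eigenvalue equation, exactly as was done in the proof of \textbf{Theorem}~\ref{counter}, and then to invoke the known absence-of-eigenvalues results for Schr\"odinger operators (of the type in Refs.~\cite{Kato1959, Odeh1965, Simon1969}) on that reduced equation. First I would suppose, for contradiction, that $E \in (-\infty,-m]\cup[m,\infty)$ is an embedded eigenvalue with eigenfunction $f = (f_1,f_2) \in D(H) = D(L)\otimes H^1$, so that $Hf = Ef$. Reading off the two components of the matrix equation $H_L f + V f = E f$ gives $f_2 = E f_1$ from the top row and $L f_1 + Q f_2 = E f_2$ from the bottom row; substituting the first into the second collapses the system to a single scalar equation for $\psi := f_1$, namely
\begin{equation}
\label{reduced}
\left(L + 2E\,b_0^{c} - E^2\right)\psi = 0, \qquad \psi \in D(L),
\end{equation}
where I have used $Q = 2b_0$ and $l = -\Delta + m^2 + q$ with $q$ here built only from the electric Coulomb part. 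This rewrites the relativistic problem in Schr\"odinger form, with an effective potential $W := q + 2E\,b_0^{c} - (E^2 - m^2)$ playing the role of the potential minus the spectral parameter.

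Next I would isolate the spectral parameter by setting $\mu := E^2 - m^2 \ge 0$ (which is nonnegative precisely because $E$ lies in the essential spectrum $|E|\ge m$) and rewriting \eqref{reduced} as $(-\Delta + \tilde W)\psi = \mu\,\psi$, where $\tilde W = -(b_0^c)^2 + 2E\,b_0^c$ is the effective electric potential appearing in the combination $V_1 + V_2 = -(b_0^c)^2 + k\,b_0^c$ of assumption \textbf{VI)} with the choice $k = 2E$. The point of hypothesis \textbf{VI)} is exactly to guarantee that this $\tilde W$ satisfies the structural conditions — relative $-\Delta$-boundedness with bound less than one (VI.1), boundedness off a null set on a connected complement (VI.2), the decay $|x|\,b_0^2 \to 0$ and the sign/decay conditions (VI.3)–(VI.4), and the crucial repulsivity-type condition $k\,\overline{\lim}_{r\to\infty} r\,\partial_r b_0^c \le 0$ (VI.5) — under which the Kato--Agmon--Simon type unique-continuation and virial arguments apply. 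I would then cite the relevant Schr\"odinger absence-of-positive-eigenvalues theorem to conclude that $(-\Delta + \tilde W)\psi = \mu\psi$ admits no nonzero $L^2$ solution for $\mu \ge 0$, forcing $\psi = f_1 = 0$ and hence $f_2 = E f_1 = 0$, contradicting $f\ne 0$.

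The main obstacle I anticipate is verifying that the reduction \eqref{reduced} is legitimate at the level of operator domains and self-adjointness, rather than merely formal: one must check that multiplying through by $b_0^c$ and forming $L + 2E b_0^c - E^2$ yields an operator to which the Schr\"odinger theorems genuinely apply, that the Coulomb singularity at the origin (with $|e|$ controlled by the bounds in \textbf{IV)}) does not destroy the relative boundedness in (VI.1), and that the domain $D(L)$ of the extension from Lemma~1.5 of \cite{Weder1978} embeds into the domain on which the unique-continuation result holds. A secondary subtlety is the sign of $E$: the factor $2E\,b_0^c$ flips sign with $E$, so I would need to confirm that condition (VI.5), stated for ``some suitable $k$,'' can be met for the actually-occurring value $k = 2E$ of the spectral parameter in \emph{both} the positive continuum $[m,\infty)$ and the Dirac sea $(-\infty,-m]$; the replacement clause in (VI.4) allowing $k\,b_0(x) < 0$ for large $|x|$ is presumably what secures this for both signs. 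Once these domain and sign issues are dispatched, the remaining steps are the standard Schr\"odinger machinery and require no new ideas.
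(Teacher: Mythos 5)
Your proposal is correct and follows essentially the same route as the paper's own proof: the component reduction $f_2 = E f_1$ collapsing the eigenvalue equation to $(-\Delta + V_{\mathrm{eff}})\psi_1 = (E^2-m^2)\psi_1$ with $V_{\mathrm{eff}} = -(b_0^c)^2 + 2E\,b_0^c$, the identification $V_1 = -(b_0^c)^2$, $V_2 = 2E\,b_0^c$ (i.e.\ $k = 2E$) in assumption \textbf{VI)}, and the appeal to the Kato--Agmon--Simon theorem for Schr\"odinger operators. The domain issue you flag is exactly what the paper dispatches by noting that \textbf{VI)}(1) plus Kato's theorem gives $D(L) = H^2$, the domain of the free Schr\"odinger operator, so the reduction is legitimate and not merely formal.
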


As an application of \textbf{Theorem} \ref{main}, we can prove the $\textbf{R}^{3}$ case, in which includes the non-central electric Coulomb interaction. Even though the following result is stated here as the \textbf{Theorem} \ref{R3}, it is a corollary of \textbf{Theorem} \ref{main}. Notice that for the important case $S=\{\textbf{0}\}$, we don't have the necessary connectedness present in assumption \textbf{VI)} - $2)$, and in this case we shall directly apply other \textbf{Theorem} (see details in Ref. \cite{Simon1969}) with appropriate modifications, presented here as the \textbf{Theorem 5} below.

\begin{theorem} \label{R3}
The self-adjoint extension, $H$, of the K-G operator with a pure $ b_{0}^{c}$ as a Coulomb interaction, $\frac{e}{|x|}$, in $\textbf{R}^{3}$ (called here $\tilde{H}_{C}$), has no eigenvalues embedded in its essential spectrum in the following sense: for the attractive case ($b_{0}^{c}<0$) this is valid for $[m,\infty)$, and for the repulsive ($b_{0}^{c}>0$), it holds in $(-\infty,-m]$.
\end{theorem}

\subsection*{Proof of the Theorem \ref{main}:}

We can use to K-G operator, and so demonstrate \textbf{Theorem} \ref{main}, the same technique already used to Schrödinger's operator (as in \textbf{Theorem} \ref{counter}), where is necessary to written K-G operator in Schodinger's form. On the other hand, it is relevant to observe some differences. In this case, see that we are using pure electric interactions, and additional cares must to be taken during the transition process from K-G to Schrödinger equation. In fact, the relation between Schr{\"o}dinger and K-G operators is such that it is possible to construct a map between particular effective Sch{\"o}dinger potentials, and pure electric potentials in the structure of the K-G operator. 
Although such map includes an explicit dependence on the spectral parameter, it is important to emphasize that all the computations can be perfectly done by keeping then fixed. More details about this approach can be found in Ref. \cite{Lundberg1973}.

We have that the self-adjoint extension, $\tilde{H}$, of the K-G operator without scalar potential, magnetic potential and $b_{0}^{1}=0$, becomes
\begin{align*}
\tilde{H}=
\begin{bmatrix}
    0       & 1  \\
   -\Delta+m^{2}-(b_{0}^{c})^{2}       & 0
\end{bmatrix} \\
D(\tilde{H})=D(L) \otimes H^{1}.
\end{align*}

It is clear that the absence of magnetic and scalar potentials, and the choice $b_{0}^{1}=0$, are all in accordance with Weder's assumptions \textbf{I)}  - \textbf{V)}. Then, we write the eigenvalue equation $\tilde{H} \psi=E \psi$,
\begin{align*}
\psi=
\begin{bmatrix}
    \psi_{1} \\
    \psi_{2}
\end{bmatrix} \in D(\tilde{H}).
\end{align*}

Hence, in terms of $\psi_{1}$, one can see the compatibility of the above K-G eigenvalue equation with the Schr{\"o}dinger type one,

\begin{equation*}
(-\Delta + V_{eff})\psi_{1}=\tilde{E}\psi_{1}, \quad V_{eff} \equiv -(b_{0}^{c})^(2)+2Eb_{0}^{c}, \quad \tilde{E} \equiv E^{2}-m^{2}.
\end{equation*}
Pick $V_{1}(x)=-(b_{0}^{c})^{2}$ and $V_{2}(x)=2Eb_{0}^{c}$. See that by \textbf{VI)}, $D(L)=H^{2}$, which is the domain of the free Schrödinger operator, and this is enough to the use of known Kato's theorem (see more details in Ref. \cite{Kato1995}). Hence, $V_{eff}$ satisfies all the hypothesis of Kato-Agmon-Simon's theorem \cite{ReedSimon4}. Thus, the \textbf{Theorem} \ref{main} is proven. \rule{5pt}{5pt}

\subsection*{Proof of the Theorem \ref{R3}:}

As mentioned, the proof of \textbf{Theorem} \ref{R3} is based in the use of a theorem demonstrated in Ref. \cite{Simon1969}, with conditions sligtly modified without loss of generality, in order to adapt it to our case. Here, we enunciate it as \textbf{Theorem} \ref{simon}.

\begin{theorem}{} \label{simon}
Let $V$ be a real valued function in \textbf{$R^{n}$}$\backslash \textbf{0}$ with the following properties:

(a) $V \in L^{2}+L^{\infty}$;

$V=V_{1}+V_{2}$ with

(b) for some $R_{0}$, $V_{1}$ and $V_{2}$ are $C^{3}$ in $M=\{\textbf{r}|r>R_{0}\}$,

(c) $\lim_{r \rightarrow \infty}V_{1}(\textbf{r})=0$,

(d) for $r>R_{0}$, $V_{2}(\textbf{r})<0$,

(e) for $r>R_{0}$, $-\frac{\partial V_{2}}{\partial r} \leq -\frac{1}{r}V_{2}$.

Then $-\Delta+V$ has no eigenvalues in $(0,\infty)$.
\end{theorem}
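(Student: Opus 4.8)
The plan is to argue by contradiction using a Rellich--Kato virial (commutator with the dilation generator) identity in the exterior region, exploiting that the repulsivity hypothesis (e) and the positivity $E>0$ pull the surviving terms in the same direction. Suppose $E=k^{2}>0$ were an eigenvalue with a nonzero $\psi\in L^{2}(\mathbb{R}^{n})$ solving $(-\Delta+V)\psi=E\psi$. By hypothesis (a) one has $V\in L^{2}_{\mathrm{loc}}$, so elliptic regularity gives $\psi\in H^{2}_{\mathrm{loc}}$, while hypothesis (b) guarantees that $V_{1},V_{2}\in C^{3}$ on $M=\{r>R_{0}\}$, which is exactly the smoothness needed to form $\partial_{r}V$ and to justify the integrations by parts below on the exterior shell.

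First I would set up the virial multiplier. Let $A\psi=r\,\partial_{r}\psi+\tfrac{n-1}{2}\psi$ be, up to a constant, the dilation generator applied to $\psi$, and consider
\[ 2\,\mathrm{Re}\int_{R_{0}<|x|<R}\overline{A\psi}\,\big(-\Delta+V-E\big)\psi\,dx=0. \]
Integrating by parts converts this into a sum of a positive kinetic bulk term (with its radial and angular split), the energy boundary contribution carrying the factor $E>0$, the crucial potential bulk term proportional to $\int(r\,\partial_{r}V)\,|\psi|^{2}\,dx$, and boundary integrals over $|x|=R_{0}$ and $|x|=R$. The constant $\tfrac{n-1}{2}$ is chosen, following the Kato/Rellich device, so that the undifferentiated $\int(V-E)|\psi|^{2}$ pieces cancel and only $r\,\partial_{r}V$ and the $E$-boundary data survive in the interior.

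The sign analysis is where hypotheses (c)--(e) enter. I would split $V=V_{1}+V_{2}$. For the long-range part, hypothesis (e) rewrites as $r\,\partial_{r}V_{2}\ge V_{2}$ with $V_{2}<0$, which is precisely the condition guaranteeing that the combined potential--energy term has the sign required to force $\psi$ to decay, rather than opposing the positive kinetic and energy contributions; for the short-range part, hypothesis (c) ($V_{1}\to0$) makes its contribution a lower-order remainder absorbable by enlarging $R_{0}$. Since $\psi,\nabla\psi\in L^{2}$, I would then pick a sequence $R_{j}\to\infty$ along which the outer boundary integrals on $|x|=R_{j}$ tend to $0$. Passing to the limit, the identity forces a nonnegative combination of $\int_{r>R_{0}}|\psi|^{2}$ and the boundary data at $R_{0}$ to vanish, which pins $\psi\equiv0$ on the exterior region $\{r>R_{0}\}$. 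Because $\mathbb{R}^{n}\setminus\{0\}$ is connected and $V\in L^{2}_{\mathrm{loc}}$ there, Kato's strong unique continuation theorem (as used in \cite{Kato1959, Simon1969}) then propagates this vanishing to all of $\mathbb{R}^{n}\setminus\{0\}$, contradicting $\psi\neq0$; hence $-\Delta+V$ has no eigenvalue in $(0,\infty)$.

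The main obstacle is the rigorous handling of the boundary term at infinity together with the non-radial character of $V$: although (e) constrains only the radial derivative $\partial_{r}V_{2}$, the Rellich identity produces both radial and angular pieces of $|\nabla\psi|^{2}$, and one must verify that the repulsivity condition still controls the sign once the angular terms are accounted for, and that the outer boundary integrals genuinely vanish along a subsequence rather than merely staying bounded. Secondary technical points are localizing everything to $r>R_{0}$ so as to bypass the singularity at the origin permitted by (a), and, if the limiting argument requires a priori decay of $\psi$, supplying it through Kato's lower bound on the decay of eigenfunctions for $E$ above $\limsup V$.
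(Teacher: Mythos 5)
You should know at the outset that the paper does not prove this statement at all: Theorem~5 is imported from Ref.~\cite{Simon1969} (``a theorem demonstrated in Ref.~\cite{Simon1969}, with conditions slightly modified'') and is used only as a black box in the proof of Theorem~4. So the relevant comparison is with Simon's proof, whose architecture is entirely different from yours: one first shows, via differential inequalities for spherical $L^{2}$-means of $\psi$ (the Kato device), that an eigenfunction with $E>0$ must decay faster than every exponential; one then invokes Kato's theorem that a solution of the exterior equation with such super-exponential decay vanishes identically near infinity; finally, unique continuation (available for $V\in L^{2}+L^{\infty}$ in dimension three) forces $\psi\equiv 0$. The ingredient you relegate to a ``secondary technical point'' --- Kato's decay estimates --- is in fact the engine of the actual proof, and your one-shot virial identity cannot substitute for it, for the concrete reasons below.

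The decisive gap is your treatment of $V_{1}$. Your multiplier produces the bulk term $\int r\,\partial_{r}V\,|\psi|^{2}$, i.e.\ it differentiates \emph{all} of $V$; but hypothesis (c) gives information only about $V_{1}$, none about $\partial_{r}V_{1}$, so the $V_{1}$-part of that term is not a ``lower-order remainder absorbable by enlarging $R_{0}$'' --- it is completely uncontrolled. The paper's own von Neumann--Wigner potential is the decisive witness: it tends to zero, yet $\limsup_{r\to\infty} r\,\partial_{r}V_{NW}=16$ and $-\Delta+V_{NW}$ has the embedded eigenvalue $1$. Worse, writing $V_{NW}=\left(V_{NW}+\delta r^{-1}\right)+\left(-\delta r^{-1}\right)$ exhibits a decomposition satisfying every one of (a)--(e) exactly as stated (check (e): $-\partial_{r}(-\delta r^{-1})=-\delta r^{-2}\le \delta r^{-2}$), so the statement in this literal form is false and no blind proof of it can close; Simon's actual hypothesis on $V_{1}$ is a Kato-type condition ($r V_{1}\to 0$ at infinity), and his proof is built precisely so that $V_{1}$ is never differentiated --- something a virial multiplier cannot arrange. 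There are two further gaps. First, hypothesis (e) is the \emph{lower} bound $r\,\partial_{r}V_{2}\ge V_{2}$, whereas your sign argument needs an upper bound of the form $2V_{2}+r\,\partial_{r}V_{2}\le 0$ (the Weidmann-type repulsivity $\partial_{r}(r^{2}V_{2})\le 0$): the potential $V_{2}=-r^{-3}$ satisfies (d) and (e) but has $2V_{2}+r\,\partial_{r}V_{2}=r^{-3}>0$, so (e) is not ``precisely the condition guaranteeing'' your sign. Second, since $V$ is merely $L^{2}+L^{\infty}$ inside $\{r<R_{0}\}$, your identity must stop at $|x|=R_{0}$, leaving inner boundary integrals in $|\partial_{r}\psi|^{2}$, $|\nabla_{\omega}\psi|^{2}$, $(E-V)|\psi|^{2}$ and $\mathrm{Re}(\bar\psi\,\partial_{r}\psi)$ of no definite sign; nothing in the limit $R_{j}\to\infty$ then ``pins $\psi\equiv 0$'' on the exterior region.
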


We have that the self-adjoint extension of the K-G operator $\tilde{H}_{c}$, without scalar and magnetic potentials, $b_{0}^{1}=0$, and $q_{c}(x)=-\frac{e}{|x|^{2}}$, $|e| \leq (n-2)/2$ (if $n>2$), and $b_{0}^{c}(x)=\frac{e}{|x|}$, $|e| \leq (n-2)/2 \sqrt{17}$ (if $n>3$), becomes

\begin{align*}
 \tilde{H}_{C}=
\begin{bmatrix}
   0       & 1  \\
   -\Delta+m^{2}-\frac{e^{2}}{|x|^{2}}       & \frac{2e}{|x|}
\end{bmatrix} \\
D(\tilde{H}_{C})=D(L) \otimes H^{1}.
\end{align*}

It is easy to see that the lack of magnetic and scalar potentials, and the choice $b_{0}^{1}=0$, are turning \textbf{I)}-\textbf{V)} conditions satisfied. Hence, we get the eigenvalue equation for $\tilde{H}_{C}$, $\tilde{H}_{C} \psi=E \psi$,

\begin{align*}
\psi=
\begin{bmatrix}
    \psi_{1} \\
    \psi_{2}
\end{bmatrix} \in D(\tilde{H}_{C}).
\end{align*}
Then, in terms of the component $\psi{1}\in H^{2}$, one can see the compatibility of the above K-G eigenvalue equation with the following Schrödinger equation,

\begin{equation*}
(-\Delta + V_{eff})\psi_{1}=\tilde{E}\psi_{1}, \quad V_{eff} \equiv -\frac{e^{2}}{|x|^{2}} + 2E\frac{e}{|x|}, \quad \tilde{E} \equiv E^{2}-m^{2}.
\end{equation*}

In this case, pick $V_{1}(x)=-\frac{e^{2}}{|x|^{2}}$ and $V_{2}=2E\frac{e}{|x|}$, hence $V_{eff}$ holds for all hypothesis of Simon's theorem or its modification given by \textbf{Theorem 5}, which means that the components $\psi_{1} \in H_{2}$ are ensuring the use of Kato's theorem, and so it is possible directly to apply the \textbf{Theorem 5}, with exception that for the attractive case, (d) holds only if $E>m$, and that for the repulsive case it holds only if  $E<m$. Thus, the Theorem \ref{R3} is proven. \rule{5pt}{5pt}

\section{Conclusions and Remarks\label{sec4}}

We found a class of physical potentials in which there is no eigenvalues embedded in the continous region of relativistic spectrum. It is possible to see that the fact associated to the non-existence of eigenvalues embedded in the essential spectrum can be intepreted as an expected result. We clearly show that with the correct location of continuous spectrum together with the result of theorem 3, we can rigour confirm if thresholds  $\pm m$ belong, or no, to the point spectrum. Here, we emphasize that essential spectrum coincides with $(-\infty,-m]\bigcup[m,\infty)$, and so $\pm m$ dont belong to the point spectrum. Generally, these results are not so clear by use of the computational calculations of theoretical physics.

It is worthy of emphasis that the present paper demonstrates that relativistic K-G operator contains eigenvalues in its essential spectrum, as occurs to Schrodinger's operators. This is a new and interesting result, undemonstrated before in literature to relativistic operators, and introduced here. It is remarkable that the potential in wich this occurs is \textit{von}-Neumann Wigner potential, as expected.  In particular, this potential provides similar results for  Schodinger's operators, i. e., the existence of an unitary eigenvalue in the essential spectrum.

We have expect to be possible extend the same methodology of theorem 2 in order to investigate the existence of eigenvalues in the essential spectrum for K-G operator to higher-dimensional order. However, in this case, the self-adjointness properties must be better investigated. 

\section{Acknowledgments}

The authors gratefully acknowledge the support provided by Brazilian Agencies CAPES, CNPQ. We would like to thank the following for their kind support: IFPI and USP; for the great professors D. M. Gitman, C. R. Oliveira and R. Weder for our private communications.

\bibliographystyle{unsrt}  
\bibliography{references}  
%%% Remove comment to use the external .bib file (using bibtex).
%%% and comment out the ``thebibliography'' section.

%%% Comment out this section when you \bibliography{references} is enabled.
%\begin{thebibliography}{1}
%
%\bibitem{kour2014real}
%George Kour and Raid Saabne.
%\newblock Real-time segmentation of on-line handwritten arabic script.
%\newblock In {\em Frontiers in Handwriting Recognition (ICFHR), 2014 14th
%  International Conference on}, pages 417--422. IEEE, 2014.
%
%\bibitem{kour2014fast}
%George Kour and Raid Saabne.
%\newblock Fast classification of handwritten on-line arabic characters.
%\newblock In {\em Soft Computing and Pattern Recognition (SoCPaR), 2014 6th
%  International Conference of}, pages 312--318. IEEE, 2014.
%
%\bibitem{hadash2018estimate}
%Guy Hadash, Einat Kermany, Boaz Carmeli, Ofer Lavi, George Kour, and Alon
%  Jacovi.
%\newblock Estimate and replace: A novel approach to integrating deep neural
%  networks with existing applications.
%\newblock {\em arXiv preprint arXiv:1804.09028}, 2018.

%\end{thebibliography}

\end{document}